  \theoremstyle{plain}
  \newtheorem{lem}{Lemma}
  \theoremstyle{plain}
  \newtheorem{thm}{Theorem}
\newenvironment{lyxlist}[1]
{\begin{list}{}
{\settowidth{\labelwidth}{#1}
 \setlength{\leftmargin}{\labelwidth}
 \addtolength{\leftmargin}{\labelsep}
 }}
{\end{list}}
  \theoremstyle{remark}
  \newtheorem*{claim*}{Claim}
  \theoremstyle{remark}
  \newtheorem*{acknowledgement*}{Acknowledgment}
\def\blfootnote{\xdef\@thefnmark{}\@footnotetext} 
\begin{document}

\title{Markovian Network Interdiction and \\
the Four Color Theorem}

        

\author{Alexander Gutfraind$^{1}$}
\author{Kiyan Ahmadizadeh$^{2}$}


\keywords{Network Interdiction, Markov Random Walk, Unreactive Evader, Four Color Theorem, Computational Complexity, NP Hard}

\maketitle

\setcounter{footnote}{0}

\begin{abstract}
The Unreactive Markovian Evader Interdiction Problem (UME) asks to
optimally place sensors on a network to detect Markovian motion by
one or more {}``evaders''. It was previously proved that finding
the optimal sensor placement is NP-hard if the number of evaders is
unbounded. Here we show that the problem is NP-hard with just $2$
evaders using a connection to coloring of planar graphs. The results
suggest that approximation algorithms are needed even in applications
where the number of evaders is small. It remains an open problem to
determine the complexity of the 1-evader case or to devise efficient
algorithms.
\end{abstract}

\section{Introduction}
\blfootnote{
$^{1}$ Theoretical Division, Los Alamos National Laboratory, Los Alamos, New Mexico USA 87545, 
\href{mailto:ag362@cornell.edu}{ag362@cornell.edu} 
$^{2}$ Department of Computer Science, Cornell University, Ithaca, New York USA 14853,
\href{mailto:kiyan@cs.cornell.edu}{kiyan@cs.cornell.edu}.}

Network interdiction is a class of discrete optimization problems
originating in applications such as supply chains, sensing and disease
control \cite{Corley-1982-most,Pan-2003-models,Pourbohloul05}. In
network interdiction one or several {}``evaders'' traverse the network
and the objective is to place devices for sensing or capturing the
evaders (the text uses the words {}``sensing'', {}``capturing'' and 
{}``detecting'' interchangeably). The problem is hard in part
because the motion of the evaders is to some extent stochastic. Depending
on the application such stochasticity may be caused by random errors,
systematical misestimation of the network topology, deliberate misdirection
or computational power that is insufficient for path optimization.
The simplest model of this stochasticity is based on a Markov chain
on the nodes of the network \cite{Gutfraind08markovian}.

For a concrete example, consider the problem of placing police units
(the sensors) on the highway network to catch a bank robber (the evader)
moving towards a safehouse. Because of his haste and lack of information
his motion is not predictable with certainty. Another application
is found in problems like electronic network monitoring, where a limited
number of devices must be placed to scan as much of the traffic as
possible even though the traffic moves stochastically.

These and similar applications suggest a formulation of network interdiction
where evaders are (1) Markovian and (2) cannot or do not change their
motion based on where the sensors were placed. The resulting optimization
problem has been termed {}``Unreactive Markovian Evader Interdiction''
(UME) \cite{Gutfraind09unreactive} (see also earlier work in \cite{Berman95}).
In UME the objective is to place sensors at nodes or edges of the
network (generally a directed weighted graph $G(V,E)$), subject to
a cardinality constraint, so as to maximize the probability that the
evader passes through a sensor on his way to his target.

It was shown in \cite{Gutfraind09unreactive} that when the number
of evaders can be arbitrarily large then the UME problem is NP-hard,
but the complexity of UME is an open problem when the number of evaders
is bounded. Such complexity must arise from network structure and
stochasticity of motion, and this problem is addressed here.

An instance of UME contains an evader - a Markov chain given by initial
{}``source'' distribution $\mathbf{a}$ and transition probability
matrix $\mathbf{M}$ with the property that a specified {}``target''
node $t$ is a {}``killing'' state: upon reaching $t$ the evader
is removed from the network. The sensors are represented by a matrix
of decision variables $\mathbf{r}$: $r_{ij}=1$ if $(i,j)$ is interdicted
and $0$ otherwise. If an evader passes through a sensor at edge $(i,j)$
he is detected with probability $d_{ij}\in[0,1]$, termed {}``interdiction
efficiency'' (in general $d_{ij}\neq d_{ji}$ even if the graph is
undirected). The objective is to choose $\mathbf{r}$, subject to
a budget constraint (cardinality of non-zero entries:$\left\Vert \mathbf{r}\right\Vert \leq\beta$),
so as to maximize the probability $J$ of {}``catching'' the evader
before he reaches the target $t$. Under certain restrictions on the
Markov chain (e.g. $t$ is an absorbing state) this probability $J$
can be expressed in closed form \cite{Gutfraind09unreactive}:

\begin{equation}
J({\bf \mathbf{a}},\mathbf{M},\mathbf{r},\mathbf{d})=1-\left({\bf \mathbf{a}}\left[\mathbf{I}-\left(\mathbf{M}-\mathbf{M}\odot{\bf \mathbf{r}}\odot\mathbf{d}\right)\right]^{-1}\right)_{t}\,,\label{eq:evader-cost}\end{equation}
where the symbol $\odot$ means element-wise (Hadamard) multiplication. 

It would be convenient later to use a closely-related problem of {}``node
interdiction'', where the interdiction set is chosen from nodes rather
than edges. Define {}``interdiction of node $i$'' to mean setting
$r_{ij}=1$ $\forall(i,j)\in E$ (that is, interdicting all evaders
leaving $i$). The UME problem on nodes is then the problem of finding
an interdiction set $Q\subset V$ maximizing $J$ subject to $\left\Vert Q\right\Vert \leq B$.

UME can be generalized for applications where there are multiple evaders
or scenarios each realized with probability $w^{(k)}$ ($\sum_{k}w^{(k)}=1$).
Evader $k$ follows a Markov chain $\mathbf{a}^{(k)},\mathbf{M}^{(k)}$
and has probability of capture $J^{(k)}$ found from Eq.~\ref{eq:evader-cost}:
\begin{equation}
J^{(k)}({\bf \mathbf{a}}^{(k)},\mathbf{M}^{(k)},\mathbf{r},{\bf \mathbf{d}})=1-\left({\bf \mathbf{a}}^{(k)}\left[\mathbf{I}-\left(\mathbf{M}^{(k)}-\mathbf{M}^{(k)}\odot\mathbf{r}\odot\mathbf{d}\right)\right]^{-1}\right)_{t^{(k)}}\,.\label{eq:multiple-evader-cost}\end{equation}

The UME objective becomes maximizing the expected probability of capture:
\begin{equation}
\left\langle J\right\rangle =\sum_{k}w^{(k)}J^{(k)}\,.\label{eq:weighted-cost}\end{equation}

The motivation to determine the complexity of the problem is both
theoretical and practical. On the theoretical level it is known that
other formulations of network interdiction (such as where the evaders
react to the interdiction decisions - can see and possibly avoid the
sensors) are NP-hard and hard to approximate \cite{Bar-noy-1995-complexity,Boros06-inapproximability}.
 On the practical level one wishes to explain the difficulty solving
exactly even small instances of UME. Computational experiments described
in \cite{Gutfraind09unreactive} indicate that state-of-the-art integer
programming packages such as CPLEX version $10.1$ may fail to efficiency
solve instances of UME involving $4$ evaders on networks with just
$100$ nodes (runtime $>9$ hours). The proof in the next section
indicates that already with 2-evaders UME is NP-hard, and hence in
general UME can only be attacked using approximation algorithms (unless
P=NP).

\section{UME with $2$ evaders is NP-hard}

The following proof is a reduction of Planar Vertex Cover - an NP-complete
problem \cite{Garey74} (the word {}``vertex'' is used interchangeably
with the word {}``node''). Planar Vertex Cover asks to determine
whether given an undirected planar graph $G'(V',E')$ there exists
a set $C$ of $B'\geq0$ vertices that can {}``cover'' all the edges
of $G'$. The set $C\subset V$ is called a {}``vertex cover'' if
all the edges are incident to at least one vertex in $C$. The proof
constructs an instance of UME with possibly multiple evaders so that
each one of the edges in $E'$ is traversed by at least one evader,
who then immediately moves to a special target node. These conditions
mean that to achieve interdiction with expected probability $=1$
it would be necessary and sufficient to interdict at least one of
the incident vertices of every edge - creating a cover.

The Markovian property of the evaders sets a lower bound on the number
of evaders needed to meet those conditions - a lower bound directly
related to the chromatic number of the graph (see therein). Famously,
in the class of planar graphs graph coloring requires just $4$ colors
and polynomial time \cite{Robertson96} . The number of the evaders
needed for this reduction is $2$ because $2=\log_{2}4$ (see therein).

It would be sufficient to consider node interdiction since edge interdiction
is computationally equivalent to it:

\begin{lem}
The UME problem on edges is polynomially equivalent in complexity
to UME on nodes.
\end{lem}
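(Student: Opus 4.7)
The plan is to give polynomial-time reductions in both directions: UME on edges $\le_p$ UME on nodes, and UME on nodes $\le_p$ UME on edges. Each will be a simple graph transformation coupled with a straightforward transfer of the budget parameter, yielding equality of the optimal catch probability $J$.

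For UME on edges $\le_p$ UME on nodes I would subdivide each edge. For every $(i,j)\in E$, introduce a fresh node $w_{ij}$ and replace $(i,j)$ by the pair $(i, w_{ij}), (w_{ij}, j)$. Set $M'_{i w_{ij}} = M_{ij}$ and $M'_{w_{ij} j} = 1$, move the interdiction efficiency onto the outgoing half, $d'_{w_{ij} j} = d_{ij}$, and render the incoming half uninterdictable by setting $d'_{i w_{ij}} = 0$. Interdicting node $w_{ij}$ in the new instance then has exactly the effect of interdicting edge $(i,j)$ in the original: the evader reaches $w_{ij}$ precisely when it would have crossed $(i,j)$ and is caught there with probability $d_{ij}$. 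The budget $\beta$ transfers verbatim, Eq.~\eqref{eq:evader-cost} gives matching values of $J$, and the new instance has only $|V|+|E|$ nodes.

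For UME on nodes $\le_p$ UME on edges I would apply the standard node-splitting trick. Replace each $v$ by $v^-, v^+$ joined by a bridging edge $e_v$ with $M'_{v^- v^+} = 1$. Redirect all incoming edges of $v$ to terminate at $v^-$, and have all outgoing edges of $v$ originate from $v^+$. Every trajectory through the original $v$ must now cross $e_v$, so interdicting $e_v$ in the new instance corresponds to interdicting $v$ in the original; the budget $B$ transfers verbatim and the graph grows only linearly.

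The most delicate step, and where I expect the main obstacle, is matching interdiction efficiencies in the second reduction: node interdiction of $v$ detects on each outgoing edge with its own $d_{vj}$, whereas the single bridging edge $e_v$ carries one efficiency. When the outgoing $d_{vj}$ share a common value $d_v$, setting $d'_{e_v} = d_v$ preserves both the catch probability and the conditional post-survival transition distribution, and a short calculation via Eq.~\eqref{eq:evader-cost} confirms equality of $J$. The general case can be handled by first interposing, on each outgoing edge of $v$, a subdivision node that absorbs the per-edge efficiency $d_{vj}$ and is marked uninterdictable, leaving $e_v$ as the only meaningful interdiction site at $v$; this keeps the constructions polynomial and preserves $J$ exactly. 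With both reductions in hand, polynomial equivalence, and hence transfer of NP-hardness in either direction, follows.
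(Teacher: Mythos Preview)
Your proposal is correct and follows essentially the same approach as the paper: the paper's proof is a two-line sketch stating exactly the two standard gadgets you use---subdivide each edge to reduce edge-UME to node-UME, and split each node into an in-copy/out-copy pair joined by a bridging edge to reduce node-UME to edge-UME. You supply considerably more detail than the paper (in particular the handling of per-edge efficiencies $d_{ij}$ via auxiliary subdivision nodes), but the underlying constructions are identical.
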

\begin{proof}
The idea is standard: split each edge to create UME on nodes; to create
UME on edges break each node into two nodes connected by an edge.
See \cite{Gutfraind09unreactive} for details. 
\end{proof}
\begin{thm}
The class of UME problems on nodes with $2$ evaders is NP-hard.
\end{thm}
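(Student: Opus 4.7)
The plan is a polynomial-time reduction from Planar Vertex Cover to $2$-evader node-interdiction UME, exploiting the Four Color Theorem to encode the four available colors of any planar graph into two bits, one per evader. Given an instance $(G'(V',E'),B')$ of Planar Vertex Cover, the first step is to compute in polynomial time a proper $4$-coloring $c : V' \to \{0,1,2,3\}$ using the algorithm of Robertson et al., and to write $c(v) = b_1(v) b_2(v)$ as a two-bit binary string. For every edge $(u,v) \in E'$, the inequality $c(u) \neq c(v)$ forces $b_j(u) \neq b_j(v)$ for some $j \in \{1,2\}$, so each edge is ``handled'' by at least one of the two bits and thus by at least one of the two evaders.

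The UME instance is built on vertex set $V = V' \cup \{t\}$ with perfect interdiction efficiency $d_{ij} = 1$, two equally weighted evaders $w^{(1)} = w^{(2)} = 1/2$, budget $B = B'$, and Markov chains $\mathbf{M}^{(j)}$ defined as follows. Evader $j$ has initial distribution uniform on $S_j = \{v \in V' : b_j(v) = 0 \text{ and } v \text{ has a neighbor } u \text{ with } b_j(u) = 1\}$. From $v \in S_j$ the chain transitions uniformly to a neighbor $u$ with $b_j(u) = 1$; from any such $u$ it moves deterministically to the absorbing target $t$; and transitions out of any state unreachable from $S_j$ are set to $t$, making $\mathbf{M}^{(j)}$ stochastic. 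Every trajectory of evader $j$ therefore has the two-step form $v \to u \to t$ with $(u,v) \in E'$ and $b_j(u) \neq b_j(v)$, and every such edge is traversed with positive probability.

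Since $d \equiv 1$, equation~(\ref{eq:multiple-evader-cost}) yields $J^{(j)} = 1$ iff every trajectory $v \to u \to t$ meets a node in the interdiction set $Q$. Interdicting $v$ intercepts the $v \to u$ transition and interdicting $u$ intercepts the $u \to t$ transition, so the trajectory is caught iff $\{u,v\} \cap Q \neq \emptyset$. Consequently $\left\langle J\right\rangle = 1$ iff $J^{(1)} = J^{(2)} = 1$ iff every edge $(u,v) \in E'$ has at least one endpoint in $Q$ -- iff $Q$ is a vertex cover of $G'$. Hence $G'$ has a vertex cover of size $\leq B'$ iff the UME instance admits an interdiction set of size $\leq B$ attaining $\left\langle J\right\rangle = 1$, and since the construction is polynomial-time, $2$-evader UME on nodes is NP-hard.

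The main conceptual step is recognizing that the two-step $v \to u \to t$ structure makes interdicting \emph{either} endpoint suffice (matching vertex cover rather than something stronger), while the $4$-coloring/two-bit encoding is what allows the entire edge set $E'$ to be split among only two bipartite evader chains -- a graph of higher chromatic number would force more evaders. A minor technical point is confirming the regularity conditions needed for equation~(\ref{eq:multiple-evader-cost}); in our chains every trajectory reaches the absorbing target $t$ in at most two steps, so this is immediate.
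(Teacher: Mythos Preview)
Your proposal is correct and follows essentially the same reduction as the paper: Planar Vertex Cover $\to$ $2$-evader UME via a polynomial-time $4$-coloring, with the four colors encoded as two bits (one per evader) so that every edge of $G'$ is traversed with positive probability by some evader along a two-step path $v\to u\to t$. The paper phrases the construction in terms of color classes $S_i,P_i$ and then observes the bit-string correspondence, whereas you start from the bits directly; your restriction of $S_j$ to vertices that actually have an opposite-bit neighbor is a slightly cleaner way to keep the transition matrix well-defined than the paper's treatment, but otherwise the argument and its key ideas coincide.
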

\emph{Proof: }Given an instance of the Planar Vertex Cover problem
$G'(V',E')$ with budget $B'$ construct an instance of UME node interdiction
on a derived graph $G(V,E)$ as follows in steps 1-3.\\
\\
Step 1: Graph Coloring. Run the algorithm \cite{Robertson96} on $G'$
and compute the color assignment: $f:V'\to\left\{ white,red,green,black\right\} $
(abbreviated $\left\{ w,r,g,b\right\} $).  

\hfill{}\\
Step 2: Construction of the UME graph. Assemble $G(V,E)$ as follows
(Fig.~\ref{fig:reduction1}):\\
(a) The nodes are copied from $V'$ and a special {}``target'' node
$t$ is added: $V=V'\cup\left\{ t\right\} $\\
(b) Include the original edges $E'$ and for all $u\in V'$ (non-singletons)
add an edge $(u,t)$ to $t$: $E=E'\cup\left\{ (u,t)\left|u\in V'\mbox{ and degree}(u)>0\right.\right\} $\\
(c) Define $\mathbf{d}$: All nodes $u\in V'$ can be completely interdicted:
$d_{uv}=1$ $\forall u,v$. \\

\begin{figure}
\begin{centering}
\includegraphics[width=1\columnwidth]{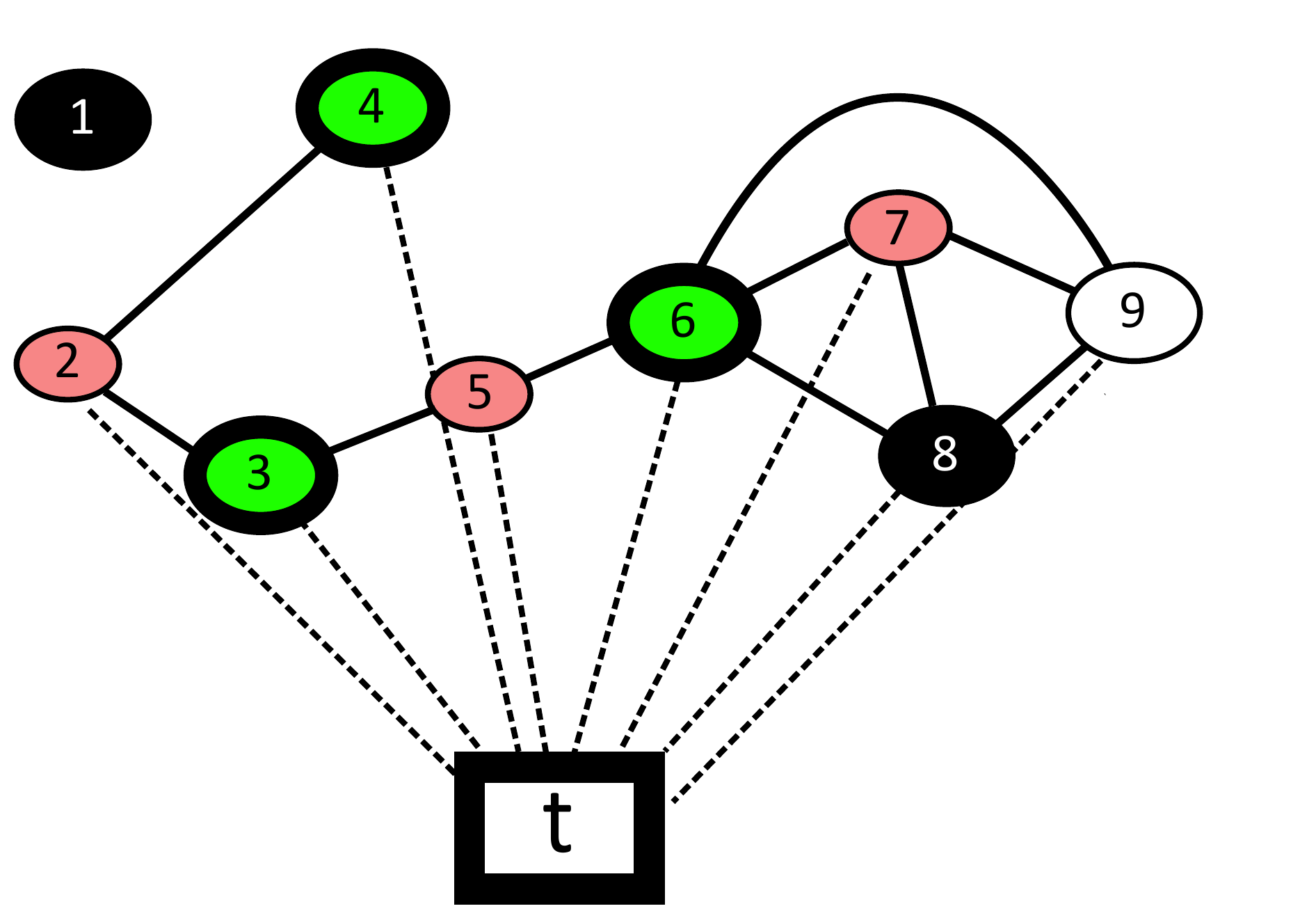}
\par\end{centering}

\caption{\label{fig:reduction1}The graphs $G'$ and $G$. The original Set
Cover instance is on $G'$ (drawn with elliptical nodes and solid
edges). $G$ is created by adding node $t$ (rectangle) and the edges
to $t$ (dashed lines).}

\end{figure}

\hfill{}\\
Step 3: Construction of the source distributions and transition matrices.
The two evaders would follow 3-node paths: from some source node through
a {}``penultimate node'' to the node $t$, as follows. \\
Define two sets of {}``source nodes'' $S_{1}$ and $S_{2}$ by including
all the non-singleton nodes with colors $\left\{ w,r\right\} $ in
$S_{1}$ and all non-singletons with colors $\left\{ w,g\right\} $
in $S_{2}$:

\begin{lyxlist}{00.00.0000}
\item [{$S_{1}=\left\{ u\left|u\in V'\smallsetminus\left\{ t\right\} \mbox{ and degree}(u)>0\mbox{ and }f(u)\in\left\{ w,r\right\} \right.\right\} $}] and 
\item [{$S_{2}=\left\{ u\left|u\in V'\smallsetminus\left\{ t\right\} \mbox{ and degree}(u)>0\mbox{ and }f(u)\in\left\{ w,g\right\} \right.\right\} $.}]~
\end{lyxlist}
Define two sets of {}``penultimate nodes'' $P_{1}$ and $P_{2}$
over all the non-singleton nodes with colors $\left\{ g,b\right\} $
in $P_{1}$ and all non-singletons with colors $\left\{ r,b\right\} $
in $P_{2}$:

\begin{lyxlist}{00.00.0000}
\item [{$P_{1}=\left\{ u\left|u\in V'\smallsetminus\left\{ t\right\} \mbox{ and degree}(u)>0\mbox{ and }f(u)\in\left\{ g,b\right\} \right.\right\} $}] and
\item [{$P_{2}=\left\{ u\left|u\in V'\smallsetminus\left\{ t\right\} \mbox{ and degree}(u)>0\mbox{ and }f(u)\in\left\{ r,b\right\} \right.\right\} $.}]~
\end{lyxlist}
Finally, introduce evaders $\left\{ 1,2\right\} $. For each evader
$i$, let $\mathbf{a^{\mathbf{(i)}}}$be uniformly distributed over
all nodes of class $S_{i}$ and define $\mathbf{M}^{(i)}$ so the
evader follows the 3-node path discussed earlier:

\begin{enumerate}
\item $M_{uv}^{(i)}=0$ if $u\notin S_{i}$ or $u=t$ or $v\notin P_{i}$
\item $M_{uv}^{(i)}=\frac{1}{z_{u}}$ if $u\in S_{i}$ and $v\in P_{i}$
where $z_{u}=\left\Vert \left\{ v\left|v\in P_{i}\mbox{ such that }(u,v)\in E\right.\right\} \right\Vert $
\item $M_{uT}^{(i)}=1$ if $u\in P_{i}\,.$
\end{enumerate}
An illustration of the evader motion is found in Fig.~\ref{fig:reduction2}.
In the pathological case where all nodes in $G'$ are singletons,
arbitrarily choose any node $u\neq t$ and for $i\in\left\{ 1,2\right\} $
let $\mathbf{a^{\mathbf{(i)}}}=\delta_{uv}$ with $\mathbf{M^{\mathbf{(i)}}}=\mathbf{0}$. 

\begin{lyxlist}{00.00.0000}
\item [{Observation}] 1: each of the non-singleton nodes belongs to one
of four disjoint set intersections, corresponding to the four colors
$\left\{ w,g,r,b\right\} $: $w\leftrightarrow S_{1}\cap S_{2}$,
$g\leftrightarrow S_{2}\cap P_{1}$, $r\leftrightarrow S_{1}\cap P_{2}$
and $b\leftrightarrow P_{1}\cap P_{2}$. These can be viewed as the
four bit strings of length 2: $00$, $01$, $10$ and $11$ (hence
$\log_{2}4$ evaders).
\item [{Observation}] 2: no source node coincides with a corresponding
penultimate node: $P_{1}\cap S_{1}=\emptyset$ and $P_{2}\cap S_{2}=\emptyset$.
Thus a direct jump from $S_{i}$ to $t$ has probability $=0$.
\item [{Observation}] 3: node $t$ could be pruned from any interdiction
set without changing the expected interdiction probability because
interdiction only affects outgoing evaders and node $t$ has none.
\end{lyxlist}
Notice also that the definitions of $\mathbf{a^{\mathbf{(i)}}}$ and
$\mathbf{M^{\mathbf{(i)}}}$ do not guarantee that there is a path
from every node of type $S_{i}$ to node $t$ but still meet the requirement
that each edge would be traversed by at least one evader. An extreme
example is a star graph such that $S_{1}\cap S_{2}$ nodes surround
a central node of type $S_{1}\cap P_{2}$: the edges are traversed
by evader $2$ and evader $1$ cannot reach node $t$.\\
\begin{figure}
\begin{centering}
\includegraphics[width=1\columnwidth]{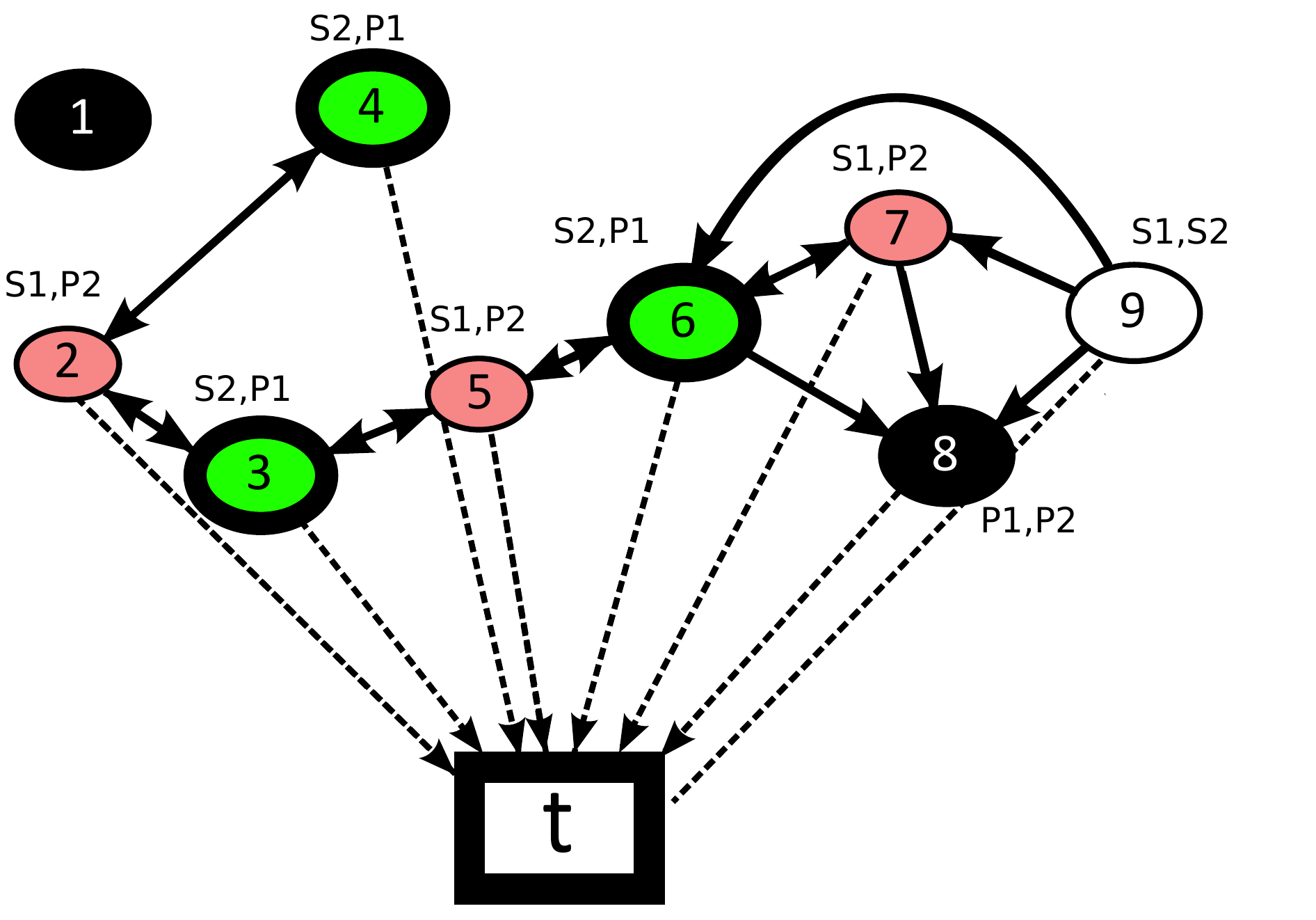}
\par\end{centering}

\caption{\label{fig:reduction2}The graph $G$ showing the evaders and classes
of the non-singleton nodes. \emph{White} indicates class $S_{1}\cap S_{2}$,
\emph{green} (large ellipses) indicates class $S_{2}\cap P_{1}$,
\emph{red} (small ellipses) indicates class $S_{1}\cap P_{2}$ and
\emph{black} indicates class $P_{1}\cap P_{2}$. Evader motion is
indicated by arrows. For example, the bi-directional arrow between
nodes $3$ and $5$ indicates that it is passed in both direction
by evaders: with $Pr>0$ evader $1$ moves along $5\to3\to t$ and
evader $2$ moves along $3\to5\to t$. }

\end{figure}

Define the UME decision problem: Is it possible to find an interdiction
set $Q$ of size at most $B$ so that expected interdiction probability
$\left\langle J\right\rangle =1$?

\begin{claim*}
The UME decision problem with budget $B$ set to $B'$ is a {}``YES''
instance iff a $B'$-cover exists for the graph $G'$.
\end{claim*}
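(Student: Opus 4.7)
\emph{Proof plan.} The claim asserts an ``iff'', so I would split into two directions, with the backward direction being the interesting one since it is where the 4--coloring does the real work. Throughout I would assume $w^{(1)},w^{(2)}>0$ (say both $=1/2$), and handle the pathological all--singletons case separately as a trivial instance in which both the cover and the interdiction set can be taken empty.

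For the forward direction ($\Leftarrow$), suppose $C\subseteq V'$ is a vertex cover of $G'$ with $|C|\le B'$, and set $Q:=C$. The definition of $\mathbf{M}^{(i)}$ confines every sample path of evader $i$ to the three--node walk $s\to p\to t$ with $s\in S_i$, $p\in P_i$, and $(s,p)\in E'$. Because $C$ covers this edge, either $s\in Q$ or $p\in Q$, and since $d_{uv}=1$ everywhere, the evader is caught with probability $1$ on that walk. Hence $J^{(i)}=1$ for $i=1,2$, so $\langle J\rangle=1$ and $|Q|\le B'=B$, giving a YES instance of UME.

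For the backward direction ($\Rightarrow$), suppose $|Q|\le B$ achieves $\langle J\rangle=1$. Since both weights are positive, $J^{(1)}=J^{(2)}=1$, and by Observation~3 I may replace $Q$ by $Q\setminus\{t\}\subseteq V'$ without loss. I claim this is a vertex cover of $G'$. Pick any edge $\{u,v\}\in E'$; since $f$ is a proper 4--coloring, $f(u)\ne f(v)$. The key step is a short case check over the $\binom{4}{2}=6$ color--pairs, using the identifications $w\leftrightarrow S_1\cap S_2$, $r\leftrightarrow S_1\cap P_2$, $g\leftrightarrow S_2\cap P_1$, $b\leftrightarrow P_1\cap P_2$ from Observation~1: in every case at least one evader $i$ has one endpoint in $S_i$ and the other in $P_i$, so it traverses the edge as a directed step $u\to v$ (or $v\to u$) with strictly positive probability, since $\mathbf{a}^{(i)}$ is uniform over $S_i$ and $M^{(i)}_{uv}=1/z_u>0$. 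Because $J^{(i)}=1$, this sample path must be interdicted, and since $t\notin Q$, the only places to do so are the source or the penultimate node; hence $u\in Q$ or $v\in Q$. Thus $Q$ covers every edge of $E'$ and has size at most $B=B'$.

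The main obstacle is the case analysis in the backward direction: one must verify that the specific assignment of colors to the four intersection classes $\{S_1\cap S_2,\ S_1\cap P_2,\ S_2\cap P_1,\ P_1\cap P_2\}$ is exactly the one for which every pair of distinct colors admits an evader traversing the corresponding edge in \emph{some} direction. This is precisely the combinatorial content of the reduction---and the reason that $\lceil\log_2 4\rceil=2$ evaders suffice---so it should be stated explicitly rather than buried. Everything else (invertibility of $\mathbf{I}-\mathbf{M}^{(i)}(\mathbf{I}-\mathbf{r}\odot\mathbf{d})$, substochasticity issues raised by the star example, and the ``no $S_i\to t$ jump'' fact) is handled by Observations 2 and 3 and by the remark that evaders with no path to $t$ contribute $J^{(i)}=1$ automatically.
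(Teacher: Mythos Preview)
Your proposal is correct and follows essentially the same approach as the paper. The only stylistic difference is that where you propose an explicit case check over the $\binom{4}{2}=6$ unordered color pairs, the paper compresses this into the single ``bit'' observation: adjacent nodes have distinct colors, hence differ in at least one of the two bits, and for that bit $i$ one endpoint lies in $S_i$ while the other lies in $P_i$---which is exactly the content of your six cases in one line.
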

\emph{Justification}: The pathological case where all nodes are singletons
is a UME {}``YES'' instance for any $B\geq0$ since the evader cannot
reach the target and it is also a Planar Vertex Cover {}``YES''
instance ($B'\geq0$) since no edges exist. 

Suppose now that a non-pathological UME instance is a {}``YES''
instance. Since adjacent nodes in $G'$ have different colors, \emph{Observation
1} implies that any two adjacent nodes $u,v\in V\smallsetminus\left\{ t\right\} $
must be different by at least one bit. Thus $\exists$evader $i$
such that one of $\left\{ u,v\right\} $ is a source node ($i^{th}$
bit $=0$) while the other is a penultimate node ($i^{th}$ bit $=1$).
The definitions of $\mathbf{a^{\mathbf{(i)}}}$ and $\mathbf{M^{\mathbf{(i)}}}$
imply that evader $i$ traverses through $(u,v)$ with $Pr>0$. Since
this is a {}``YES'' instance with $\left\langle J(Q)\right\rangle =1$,
the interdiction set $Q$ must contain at least one of the endpoints
$\left\{ u,v\right\} $ (whether or not node $t\in Q$, by \emph{Observation
3}). Therefore the set $Q$ is a cover for graph $G'$.

Conversely, if the Planar Vertex Cover decision problem is a {}``YES''
instance then there exists a vertex cover set $C$. From \emph{Observation
2} and the definition of $\mathbf{a^{\mathbf{(i)}}}$ it follows that
with $Pr=1$ the evader passes on his way to the target through the
set of edges in the original graph: $E'=\left\{ (u,v)\left|u,v\in V'\smallsetminus\left\{ t\right\} \mbox{ and }u\neq t\neq v\right.\right\} $.
Therefore make $Q=C$ and get that $\forall i$, evader $i$ will
be interdicted with expected probability $=1$. This a UME {}``YES''
instance.

\section{Discussion}

The proof in this paper shows that UME is NP-hard even under fairly
restrictive conditions: (1) only $2$ evaders are needed, (2) the
interdiction efficiencies $\mathbf{d}$ are everywhere $=1$, (3)
the graph is unweighted and undirected, and (4) the evader has the
non-retreating property \cite{Gutfraind08markovian}. It remains to
determine whether the result could be improved to the case of $1$
evader or to find a polynomial-time solution.  It is interesting
that the color-based technique introduced in the proof could be used
to solve other vertex cover problems, such as vertex cover for a bipartite
graph ($2$-colorable, so only $1$ evader is needed). Yet, on bipartite
graphs vertex cover can be solved in polynomial time \cite{Cormen01}.

\begin{acknowledgement*}
AG would like to thank Robert Kleinberg for fascinating lectures on
complexity.
\end{acknowledgement*}
\bibliographystyle{plain}
\bibliography{interdict,interdict_complexity}
{\footnotesize This article is released under Los Alamos National
Laboratory LA-UR-09-07611}
\end{document}